\documentclass[conference]{IEEEtran}

\usepackage{graphicx}
\usepackage{amsmath,epsfig,amssymb,mathrsfs}
\usepackage{algorithmic}
\usepackage{cite}
\usepackage{amsthm}
\usepackage{algorithm}
\usepackage{arydshln}
\usepackage{amsopn}
\usepackage{color}
\usepackage{float}
\floatstyle{plaintop}
\restylefloat{table}

\usepackage{etoolbox}
\makeatletter
\patchcmd{\@makecaption}
  {\scshape}
  {}
  {}
  {}
\makeatother

\setcounter{totalnumber}{5}

\newtheorem{theorem}{Theorem}[section]
\newtheorem{lemma}[theorem]{Lemma}

\newtheorem{proposition}[theorem]{Proposition}
\newtheorem{corollary}[theorem]{Corollary}

\usepackage{amsopn}

\usepackage{amssymb}

\newcommand{\E}{\mathbb{E}}

\newcommand\independent{\protect\mathpalette{\protect\independenT}{\perp}}
\def\independenT#1#2{\mathrel{\rlap{$#1#2$}\mkern2mu{#1#2}}}

\hyphenation{op-tical net-works semi-conduc-tor}

\IEEEoverridecommandlockouts

\begin{document}
\title{An Information-Theoretic Measure of Dependency Among Variables in Large Datasets}
\author{\IEEEauthorblockN{Ali Mousavi, Richard G. Baraniuk}
\IEEEauthorblockA{Department of Electrical and Computer Engineering \\
Rice University\\
Houston, TX 77005\\
}
\thanks{ This work was supported by NSF CCF-0926127, CCF-1117939; DARPA/ONR N66001-11-C-4092 and N66001-11-1-4090; ONR N00014-10-1-0989, and N00014-11-1-0714; ARO MURI W911NF-09-1-0383.

Email: \{ali.mousavi, richb\} @rice.edu }
}

\maketitle

\begin{abstract}
The maximal information coefficient (MIC),
which measures the amount of dependence between two variables,
is able to detect both linear and non-linear associations. However,
computational cost grows rapidly as a function of the dataset size.
In this paper, we develop a computationally efficient
approximation to the MIC that replaces its dynamic programming
step with a much simpler technique based on the uniform partitioning of data grid.
A variety of experiments demonstrate the quality of our approximation.
\end{abstract}

\section{Introduction}
One of the challenging issues for statisticians and computer scientists is dealing with large datasets containing hundreds of variables which some of them may have interesting but unexplored relationships with each other. This is due to
examples of massive datasets in different areas such as: social networks, astronomy, genomics, medical records, and political science. Hence, it is an interesting topic to try to come up with methods which help us to discover these relationships.

Measuring the amount of dependence among two variables has been extensively studied in the literature and several methods have been proposed for it. We review some of them in the following. In \cite{Renyi:2005}, the author has suggested seven properties to be satisfied by any measure $\phi(x,y)$ used for determining the amount of dependence between $x$ and $y$. These properties known as R\'{e}nyi's axioms are:
\begin{itemize}
 \item In defining $\phi(x,y)$ between any two random variables, neither $x$ nor $y$ should be constant with probability 1.
\item $0\leq\phi(x,y)\leq1$.
\item $\phi(x,y)=0$ if and only if $x$ and $y$ are independent from each other.
\item $\phi(x,y)=1$ if there is an arbitrary functional dependency between $x$ and $y$. In other words, if $y=f(x)$ or $x=g(y)$ where $f(.)$ and $g(.)$ are Borel-measurable functions.
\item $\phi(x,y)=\phi(y,x)$
\item if $f(.)$ and $g(.)$ are strictly monotonic functions, then $\phi(x,y)=\phi(f(x),g(y))$.
\item if $x$ and $y$ are jointly Gaussian random variables, then $\phi(x,y)=|{\rm PCC}(x,y)|$ where ${\rm PCC}$ is the Pearson correlation coefficient.
\end{itemize}

The Pearson correlation coefficient (PCC) is the most well known dependency measure. However, it is unable to detect non-linear associations. In other words, the PCC is only able to capture linear associations between two variables.

As another measure of dependency, correlation ratio of random variable $y$ (if $\sigma^2(y)$ exists and $\sigma(y)>0$) on random variable $x$, introduced in \cite{cramer:1999} and \cite{kolmogorov:1933}, is defined as
\begin{eqnarray}
 \Theta(y)=\frac{\sigma(\E(y|x))}{\sigma(y)}.
\end{eqnarray}

It is easy to show that $0\leq \Theta(y) \leq 1$ where $\Theta(y)=1$ if and only if $y=f(x)$ in which $f(x)$ is a Borel-measurable function and $\Theta(y)=0$ if $x$ and $y$ are independent. The alternative formula of the correlation ratio mentioned in \cite{Renyi:2005} is
\begin{eqnarray}
 \Theta(y) = \sup_{(g)}{|{\rm PCC}(y,g(x))|}.
\end{eqnarray}
This alternative formula leads to another measure of dependency called \textit{maximal correlation}\cite{Gabelein:1941}:
\begin{eqnarray}
  \mathbb{S}(x,y) = \sup_{f,g}{{\rm PCC}(f(x),g(y))},  
 \end{eqnarray}
where $f(.)$ and $g(.)$ are Borel-measurable functions. The author in \cite{Renyi:2005} has shown that $\mathbb{S}(x,y)=0$ if and only if $x$ and $y$ are independent. Furthermore, if there is an arbitrary functional relationship between $x$ and $y$, then $\mathbb{S}(x,y)=1$. The authors in \cite{breiman1985estimating} have introduced the \textit{alternating conditional expectation (ACE)} algorithm to find the optimal transformations.

The Spearman correlation coefficient \cite{spearman2004} is defined similar to the PCC; however, it is defined between the two ranked variables. By ranked variables we mean replacing each data point by its rank (or the average rank for equal sample points) in the ascending order. Therefore, if $\tilde{x}_i$ and $\tilde{y}_i$ denote the ranked versions of $x_i$ and $y_i$, the Spearman correlation coefficient would be
\begin{eqnarray}
 \rho=\frac{\sum_{i}{(\tilde{x}_i-\bar{x})(\tilde{y}_i-\bar{y})}}{\sqrt{\sum_{i}{(\tilde{x}_i-\bar{x})^2} \sum_{i}{(\tilde{y}_i-\bar{y})^2}}}.
\end{eqnarray}

The authors in \cite{delicado2009measuring} have expressed covariance and linear correlation in terms of principal components and generalized them for variables
distributed along a curve. They have estimated their measures using principal curves.

Mutual Information \cite{cover1991elements} is another measure that can be used for quantification of dependency between two variables since it satisfies some common properties of other dependency measures. As an example $I(x,y)=0$ if and only if $x$ and $y$ are independent.  The authors in \cite{moon1995estimation} have used kernel density estimation of probability density functions in order to estimate the mutual information between two variables. In \cite{kraskov2004estimating}, a method of mutual information estimation based on binning and estimating entropy from k-nearest neighbors is proposed.

The MIC \cite{MIC:2011} is recently proposed for quantifying dependency between two random variables. It is based on binning the dataset using dynamic programming technique to compute mutual information between different variables. It has two main properties which makes it superior in comparison with the aforementioned measures. First, it has \textit{generality} meaning that if the sample size is large enough, it is able to detect different kinds of associations rather than specific types. Second, it is an \textit{equitable} measure meaning that it gives similar scores to equally noisy associations no matter what type the association is. 

One of the problems with the MIC is the fact that its computational cost grows rapidly as a function of the dataset size. Since this computational cost may become infeasible, the authors in \cite{MIC:2011} have applied a heuristic so as not to compute the mutual information for all possible grids. This heuristic application may result in finding a local maximum.

In this paper, we develop a computationally efficient approximation to the MIC. This approximation is based on replacing the dynamic programming application used in computation of the MIC with a very efficient technique that is uniformly binning the data. We show that our proposed method is able to detect both functional and non-functional associations between different variables, similar to the MIC while more efficiently. In addition, it has a better performance in recognizing the independence between different variables.

The rest of this paper is organized as the following. In section (\ref{sec:MIC}), we review the MIC and the algorithm used to compute it from \cite{MIC:2011}. In section (\ref{sec:sub-optimal}), we introduce our new measure of dependency that is a modification to the MIC. We present simulation results in section (\ref{sec:simul}). Finally, section (\ref{sec:con}) includes the conclusion of the paper.

\section{The Maximal Information Coefficient (MIC)}\label{sec:MIC}

\subsection{MIC Definition and Properties}\label{subsec:MICDEF}
For any finite dataset $D$ which contains ordered pairs of two random variables, one can partition the first element, i.e., $x$-value of these pairs into $\ell_x$ bins and similarly partition the second element or $y$-value of these pairs into $\ell_y$ bins. As a result of this partitioning, we will have an $\ell_x$-by-$\ell_y$ grid $G$. Every cell of this grid may or may not contain some sample points from the set $D$. This grid induces a probability distribution on the cells of $G$ where the corresponding probability of each cell is equal to the portion of sample points located in that cell. That is to say
\begin{eqnarray}
p_{ij}=\frac{|D_{ij}|}{|D|},
\end{eqnarray}
where $p_{ij}$ denotes the probability corresponding to the cell located at the $i^{th}$ row and the $j^{th}$ column and $|D_{ij}|$ denotes the number of sample points falling into the $i$-th row and the $j$-th column (See Figure \ref{fig:Dij} for a graphical view of the grid G).
It is obvious that for each $(\ell_x,\ell_y)$, we will have a grid that induces a new probability distribution and hence results in a different mutual information between the two variables. 

\begin{figure}[t]
 \centering
 \includegraphics[width=60mm]{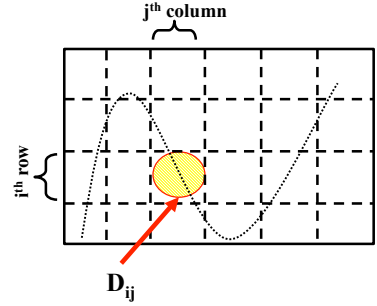}
 \caption{Partitioning of dataset $D$ into $\ell_x$ columns and $\ell_y$ rows. $D_{ij}$ denotes the set of sample points located in the $i$-th row and the $j$-th column.}
\label{fig:Dij}
\end{figure}

Let $I_{D|G}^{*}(P;Q) = \max_G I_{D|G}(P;Q)$ be the largest possible mutual information achievable by an $\ell_x$-by-$\ell_y$ grid $G$ on a set $D$ of sample points. $P$ and $Q$ are the partitions of X-axis and Y-axis of grid $G$, respectively.  In order to have a fair comparison among different grids, the computed values of mutual information should be normalized. Since $I(P;Q)=H(Q)-H(Q|P)=H(P)-H(P|Q)$, we divide $I_{D|G}^{*}(P;Q)$ by $\log(\min(\ell_x,\ell_y))$. Therefore, we have
\begin{eqnarray}
 0\leq \frac{I_{D|G}^{*}(P;Q)}{\log(\min(\ell_x,\ell_y))} \leq 1.
\end{eqnarray}
This inequality motivates the definition of the MIC as a measure of dependency between two variables. For a dataset $D$ containing $n$ samples of two variables, we have
\begin{eqnarray}
 \text{MIC}(D) = \max_{\ell_x\ell_y<B(n)}\frac{I_{D|G}^{*}(P;Q)}{\log(min(\ell_x,\ell_y))}.
\end{eqnarray}
where $B(n)=n^{0.6}$ \cite{MIC:2011} or more generally $\omega(1)\leq B(n)\leq O(n^{1-\epsilon})$.
According to this definition, the MIC has the following properties:
\begin{itemize}
 \item $0\leq \text{MIC}(D)\leq 1$.
\item $\text{MIC}(x,y)=\text{MIC}(y,x)$.
\item It is invariant under order-preserving transformation applied to the dataset $D$.
\item It is not invariant under the rotation of coordinate axes, e.g., if $y=x$, then $\text{MIC}(D)=1$. However, after a $45^{\circ}$ clockwise rotation of coordinate axes, instead of $y=x$ we have $y=0$ and hence $\text{MIC}(D)=0$.
\end{itemize}

\subsection{MIC Algorithm}\label{subsec:MICALG}
Although the algorithm for computing the MIC is fully described in \cite{MIC:2011}, here we only review the \textit{OptimizeXAxis} algorithm which is used in computation of the highest mutual information achievable by an $\ell_x$-by-$\ell_y$ grid. Any $\ell_x$-by-$\ell_y$ grid imposes two sets of partitions on $x$-values (columns of grid) and $y$-values (rows of grid). We indicate columns of the grid by $\langle c_1,c_2,\ldots,c_{\ell_x}\rangle$ where $c_i$ denotes the endpoint (largest $x$-value) of the $i$-th column.

Since $I(P,Q)$ is upper-bounded by $H(P)$ and $H(Q)$, in order to maximize it, one can equipartition either the Y or X axis, i.e., impose a discrete uniform distribution on either $Q$ or $P$. Without loss of generality, we consider the version of the algorithm that equipartitions the Y-axis. However, it is obvious that we should check both of the cases (equipartitioning either the X or Y axis) separately for each $\ell_x$-by-$\ell_y$ grid and choose the maximum resulting mutual information. 

Let $H(P)$ denote the entropy of distribution imposed by $m$ sample points ($m\leq|D|=n$) on the partition of X-axis. Similarly, let $H(Q)$ denote the entropy of distribution imposed by $m$ sample points ($m\leq|D|=n$) on the partition of Y-axis. Since we have assumed that the Y-axis is equipartitioned, $H(Q)$ is constant and equal to $\log(|Q|)$. Finally, let $H(P,Q)$ denote the entropy of distribution imposed by $m$ sample points ($m<|D|=n$) on the cells of grid $G$ which has X-axis partition $P$ and Y-axis partition $Q$. Since $I(P;Q)=H(Q)-H(Q|P)$ and we have already maximized $H(Q)$ by equipartitioning the Y-axis, to achieve the highest mutual information, we have to minimize the $H(Q|P)$. This is done by the \textit{OptimizeXAxis} algorithm \cite{MIC:2011}.

An alternative formula for the mutual information is $I(P;Q)=H(Q)+H(P)-H(P,Q)$. Since $H(Q)$ is constant, the \textit{OptimizeXAxis} only needs to maximize $H(P)-H(P,Q)$. The following theorem \cite{MIC:2011} is the key to solve this problem.

\begin{figure}[t]
 \centering
 \includegraphics[width=45mm]{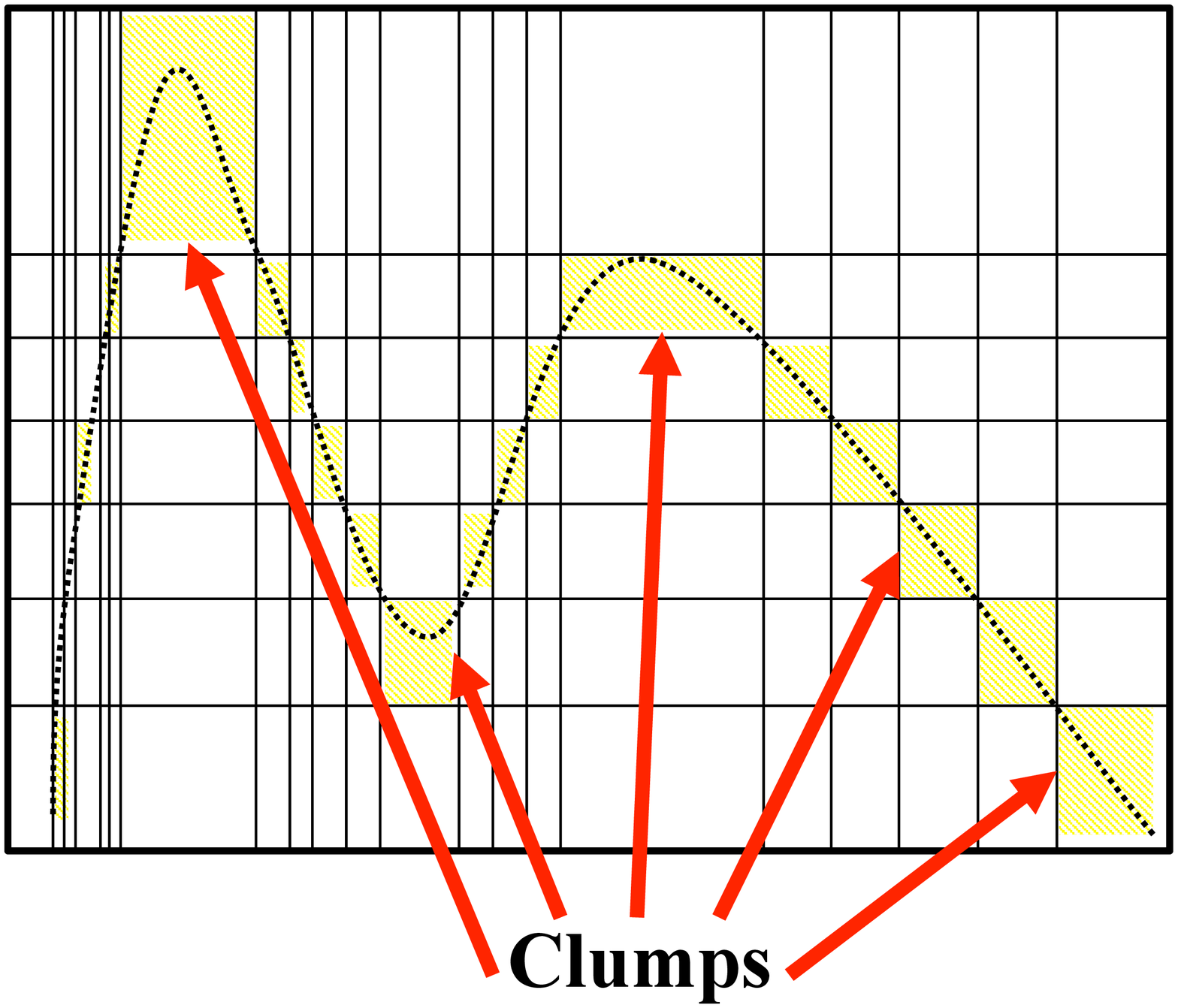}
 \caption{\textit{OptimizeXAxis} \cite{MIC:2011} considers only consecutive points falling into the same row and draw partitions between them. The set of consecutive points falling into the same row is called \textit{clump}.}
\label{fig:clump}
\end{figure}

\begin{theorem}
For a dataset $D$ of size $n$ and a fixed row partition $Q$, and for every $m,l\in \mathbb{N}$, if we define $F(m,l)=\max_{D(1:m),|P|=l}\{H(P)-H(P,Q)\}$ then for $l>1$ and $1 < m \leq n$ we would have the following recursive equation
\begin{eqnarray}
 F(m,l)=\max_{1\leq i<m}\{\frac{i}{m}F(i,l-1)-\frac{m-i}{m}H(\langle i,m \rangle,Q)\}.
\end{eqnarray}
\label{th:recursive} 
\end{theorem}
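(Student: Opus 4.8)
The plan is to rewrite the objective as a single conditional entropy and then peel off the last column of the partition, which is the natural unit of a left-to-right dynamic program. The starting point is the chain rule $H(P,Q)=H(P)+H(Q\mid P)$, which gives $H(P)-H(P,Q)=-H(Q\mid P)$. Hence $F(m,l)=\max_{|P|=l}\{-H(Q\mid P)\}$, and it suffices to establish the claimed recursion for the quantity $-H(Q\mid P)$.

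Next I would fix an $l$-column partition $P=\langle P_1,\dots,P_l\rangle$ of $D(1:m)$ and let $i$ be the index of the last point lying in the first $l-1$ columns, so that the final column $P_l$ holds exactly the points $i+1,\dots,m$ while the columns $P_1,\dots,P_{l-1}$ form an $(l-1)$-column partition $P'$ of $D(1:i)$. Writing $n_k=|P_k|$ and using $H(Q\mid P)=\sum_{k=1}^{l}\frac{n_k}{m}H(Q\mid P=k)$, I would split the sum into its first $l-1$ terms and its last term. The crucial bookkeeping step is renormalization: each column $P_k$ with $k<l$ contains exactly the same points in $P$ and in $P'$, so $H(Q\mid P=k)=H(Q\mid P'=k)$, and since $\sum_{k<l}n_k=i$ this yields $\sum_{k=1}^{l-1}\frac{n_k}{m}H(Q\mid P=k)=\frac{i}{m}H(Q\mid P')$. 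Because $n_l=m-i$ and $H(Q\mid P=l)$ is by definition the row-entropy $H(\langle i,m\rangle,Q)$ of the points in the last column (normalized within that column), I obtain the pointwise identity
\begin{equation}
H(P)-H(P,Q)=\frac{i}{m}\bigl(H(P')-H(P',Q)\bigr)-\frac{m-i}{m}H(\langle i,m\rangle,Q).
\end{equation}

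Finally I would take the maximum. The last-column term $-\frac{m-i}{m}H(\langle i,m\rangle,Q)$ depends only on the cut point $i$ and not on the inner partition $P'$, so the two optimizations decouple: for each $i$ one maximizes $H(P')-H(P',Q)$ over all $(l-1)$-column partitions of $D(1:i)$, which is precisely $F(i,l-1)$, and one then optimizes over $i\in\{1,\dots,m-1\}$. This delivers the stated recursion. I expect the only genuine obstacle to be the renormalization argument, namely confirming that conditional entropies computed with denominator $m$ can be re-expressed against denominator $i$ so that the inner optimum is exactly $F(i,l-1)$ rather than a rescaled variant; the remaining work is boundary bookkeeping, since a nonempty last column forces $i<m$ and fitting $l-1$ columns forces $i\ge l-1$, with infeasible $i$ simply contributing $-\infty$ to the maximum.
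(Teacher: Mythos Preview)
Your argument is correct and is essentially the standard derivation of this dynamic-programming recursion: rewrite the objective as $-H(Q\mid P)$, split the weighted sum of columnwise conditional entropies at the last cut point, and observe that the first $l-1$ terms renormalize to $\frac{i}{m}$ times the objective on $D(1:i)$ while the last term depends on $i$ alone. The paper itself does not supply a proof here; it simply refers the reader to Proposition~3.2 of the original MIC paper~\cite{MIC:2011}, whose argument proceeds along the same lines as yours. Your identification of $H(\langle i,m\rangle,Q)$ with the row-entropy of the final column is the reading that makes the recursion true, and your remark about the renormalization (denominator $m$ versus $i$) isolates exactly the step one has to check; it goes through because $H(Q\mid P=k)$ depends only on the within-column distribution and not on the global normalization.
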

\begin{proof}[Proof of Theorem~\ref{th:recursive}]
 See proposition 3.2. in \cite{MIC:2011}.
\end{proof}
The \textit{OptimizeXAxis} uses dynamic programming technique motivated by Theorem \ref{th:recursive}. It ensures $F(n,l)$ that is the desired partition of dataset $D$ (which has $n$ sample points) having $l$ columns imposing partition $P$ over X-axis.
In order to minimize the $H(Q|P)$, \textit{OptimizeXAxis} considers only consecutive points falling into the same row and draw partitions between them. The set of consecutive points falling into the same row is called \textit{clump} (See Figure \ref{fig:clump} for a graphical view of clump). In Algorithm \ref{alg:OptimizeXAxis}, the GetClumpsPartition subroutine is responsible for finding and partitioning the clumps. Moreover, $P_{t,l}$ is an optimal partition of size $l$ for the first $t$ clumps.

\begin{algorithm}[!t]
\caption{OptimizeXAxis($D,Q,\ell_x$) \cite{MIC:2011} \label{alg:OptimizeXAxis}}
\textbf{Require:} $D$ is a set of ordered pairs sorted in increasing order by x-values

\textbf{Require:} $Q$ is a Y-axis partition of $D$

\textbf{Require:} $\ell_x$ is an integer greater than 1

\textbf{Ensure:} Returns a list of scores $(I_2,\ldots,I_{\ell_x})$ such that each $I_l$ is the maximum value of $I(P;Q)$ over all partitions $P$ of size $l$
\begin{algorithmic}[1]
\STATE $\langle c_0,\ldots,c_k \rangle \gets \text{GetClumpsPartition($D$,$Q$)}$
\STATE 
\STATE {Find the optimal partition of size 2}
\FOR {$t=2$ to $k$}
\STATE Find $s\in \{1,\ldots,t\}$ maximizing $H(\langle c_s,c_t\rangle)-H(\langle c_s,c_t\rangle,Q)$.
\STATE $P_{t,2} \gets \langle c_s,c_t \rangle$
\STATE $I_{t,2} \gets H(Q)+H(P_{t,2})-H(P_{t,2},Q)$
\ENDFOR
\STATE
\STATE {Inductively build the rest of the table of optimal partitions}
\FOR {$l=3$ to $\ell_x$}
\FOR {$t=2$ to $k$}
\STATE Find $s \in \{1,\ldots,t\}$ maximizing $F(s,t,l):=\frac{c_s}{c_t}(I_{s,l-1}-H(Q))+\sum_{i=1}^{|Q|}\frac{\#_{i,l}}{c_t}\log{\frac{\#_{i,l}}{\#_{*,l}}}$ 

where $\#_{*,j}$ is the number of points in the $j$-th column of $P_{s,l-1}\cup c_t$ and $\#_{i,j}$ is the number of points in the $j$-th column of $P_{s,l-1}\cup c_t$ that fall in the $i$-th row of $Q$
\STATE $P_{t,l} \gets P_{s,l-1} \cup c_t$
\STATE $I_{t,l} \gets H(Q)+H(P_{t,l})-H(P_{t,l},Q)$
\ENDFOR
\ENDFOR
\RETURN  $(I_{k,2},\ldots,I_{k,\ell_x})$
\end{algorithmic}
\end{algorithm}
\section{The Uniform-MIC (U-MIC)}\label{sec:sub-optimal}
\subsection{Noiseless Setting}
The major drawback of the Algorithm \ref{alg:OptimizeXAxis} is its computational complexity. If there exists $k$ clumps in the given partition of an $\ell_x$-by-$\ell_y$ grid, the runtime of this algorithm would be $O(k^2\ell_x \ell_y)$. If there is a functional association between the two variables, the number of clumps in the corresponding grid is pretty small. However, for noisy or random datasets it is easy to imagine that the number of clumps is very large and hence the computational complexity of the Algorithm {alg:OptimizeXAxis} would be large.

Furthermore and due to this problem, this algorithm cannot be generalized in order to detect associations between more than two variables. As an example, if we want to detect whether or not three variables are related to each other, we may write the formula for the \textit{generalized} mutual information as:
\begin{align}
  I(P;Q;R)=&H(P)+H(Q)+H(R)-H(P,Q)\\
&-H(P,R)-H(Q,R)+H(P,Q,R). \nonumber
 \end{align}
Hence, intuitively and like the case for two random variables, in order to maximize the generalized mutual information, we have to equipartition one axis to maximize the entropy. Nevertheless, we should partition the two other axes with respect to the places of the clumps in them. if we equipartition the first axis and there exists $k_1$ clumps in the second axis and $k_2$ clumps in the third axis, then the runtime of this algorithm would be $O(k_1^2k_2^2\ell_x\ell_yl_z^2)$ where $\ell_x,\ell_y,l_z$ are the sizes of partitioning. This runtime is not acceptable for large datasets. Therefore, we have to modify the algorithm in order to decrease its runtime and as a result make it generalizable to higher dimensions.

The algorithm we propose in here for replacing the Algorithm \ref{alg:OptimizeXAxis} is \textit{uniform partitioning} (Algorithm \ref{alg:UMIC}). Let $y_{\min}=\min_i y_i$, $y_{\max}=\max_j y_j$, and similarly $x_{\min}=\min_i x_i$ and $x_{\max}=\max_j x_j$. we then partition both $X$ and $Y$ axes such that all the columns have length $\frac{x_{\max}-x_{\min}}{\ell_x}$ and similarly all the rows have length $\frac{y_{\max}-y_{\min}}{\ell_y}$. We call this new measure, that is derived by replacing the Algorithm \ref{alg:OptimizeXAxis} with Algorithm \ref{alg:UMIC}, by the U-MIC (Uniform Maximal Information Coefficient).  In the following we prove that the U-MIC will approach 1 as the sample size grows for when there exists a functional association between two variables (with finite derivative). Without loss of generality, we do all the proofs in the case that $(x,y) \in [0,1]\times[0,1]$. These proofs could be generalized to other cases easily.

\begin{algorithm}
\caption{UniformPartition($\ell_x,\ell_y$)\label{alg:UMIC}}
\textbf{Require:} Dataset $D$ 

\textbf{Require:} $\ell_x$ and $\ell_y$ are integers greater than 1

\textbf{Ensure:} Returns a score $I^*$ which is the value of $I(P;Q)$ where $P$ are $Q$ are distributions from uniform partitioning of both axes.

\begin{algorithmic}[1]
\STATE $P \gets$ Uniform partition of $X$-axis by $\ell_x$ columns each has length $\frac{x_{\max}-x_{\min}}{\ell_x}$ 
\STATE $Q \gets$ Uniform partition of $Y$-axis by $\ell_y$ rows each has length $\frac{y_{\max}-y_{\min}}{\ell_y}$ 
\STATE $I^*=\frac{H(P)+H(Q)-H(P,Q)}{\log(\min(\ell_x,\ell_y))}$
\RETURN $I^*$
\end{algorithmic} 
\end{algorithm}

\begin{proposition}
If $D=\{(x_i,y_i)\}_{i=1}^n$ where $y_i=h(x_i)$ and $|h'(x)|<\infty$, then $\lim_{n\to \infty}\text{U-MIC}(D)=1$. \label{pro:UMIC}
\end{proposition}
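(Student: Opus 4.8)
The plan is to bound $\text{U-MIC}$ from above by $1$ for free and then to exhibit a single family of uniform grids, indexed by $n$, on which the normalized score of Algorithm \ref{alg:UMIC} tends to $1$; since $\text{U-MIC}(D)=\max_{\ell_x\ell_y<B(n)}I^{*}$ is a maximum over admissible grids, this forces the limit. The upper bound is immediate: for any uniform grid $I(P;Q)=H(Q)-H(Q\mid P)\le H(Q)\le\log\ell_y$, and likewise $I(P;Q)\le H(P)\le\log\ell_x$, so $I(P;Q)\le\log(\min(\ell_x,\ell_y))$ and every score $I^{*}$ returned by Algorithm \ref{alg:UMIC} is at most $1$.

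For the lower bound, set $L:=\sup_x|h'(x)|<\infty$, so that $h$ is $L$-Lipschitz, and choose a square grid $\ell_x=\ell_y=\ell=\ell(n)$ with $\ell\to\infty$ slowly enough that $\ell^2<B(n)$ and that the number of samples per row tends to infinity (for $B(n)=n^{0.6}$ the choice $\ell=\lfloor n^{1/4}\rfloor$ works, since then $\ell^2=n^{1/2}<n^{0.6}$ and $n/\ell=n^{3/4}\to\infty$). The Lipschitz property is the crucial structural input: each column has width $1/\ell$, so the $y$-values of the points it contains lie in an interval of length at most $L/\ell$, which meets at most $c_0:=\lceil L\rceil+1$ of the equal-height rows. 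Hence the conditional law of $Q$ given any column is supported on at most $c_0$ rows, and therefore
\begin{equation}
 H(Q\mid P)\le\log c_0,
\end{equation}
a constant independent of $\ell$ and $n$. Writing $I(P;Q)=H(Q)-H(Q\mid P)$ then gives the clean bound $I^{*}\ge\big(H(Q)-\log c_0\big)/\log\ell$, in which the $\log c_0$ term contributes $o(1)$.

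It remains to show $H(Q)/\log\ell\to 1$, i.e.\ that the $y$-values asymptotically fill the $\ell$ rows. Here I would invoke the standard fine-partition asymptotics of entropy: if $Y=h(X)$ has a density $f_Y$ with finite differential entropy $\eta=-\int f_Y\log f_Y$, then quantizing $Y$ into $\ell$ equal bins yields $H(Q)=\log\ell+\eta+o(1)$ as $\ell\to\infty$, so that $H(Q)/\log\ell\to 1$. Combining the three estimates gives $I^{*}\to 1$, and the squeeze with the upper bound finishes the argument.

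The main obstacle, and the place where genuine care is needed, is this last distributional step, not the Lipschitz bound on $H(Q\mid P)$. Two things must be controlled simultaneously. First, the identity $H(Q)=\log\ell+\eta+o(1)$ concerns the \emph{true} law of $Y$, whereas Algorithm \ref{alg:UMIC} computes the \emph{empirical} entropy from $n$ points; one must therefore let $\ell=\ell(n)$ grow slowly enough that the well-populated rows have expected count $\gtrsim n/\ell\to\infty$, and appeal to concentration of the empirical frequencies so that the empirical $H(Q)$ tracks its population value, making the joint limit in $n$ and $\ell$ legitimate. Second, the argument tacitly requires $Y$ to be non-degenerate: if $h$ is constant, or flat on a set of positive probability, then $f_Y$ carries an atom, $\eta=-\infty$, and $\text{U-MIC}$ need not approach $1$. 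This is precisely the degenerate situation excluded by R\'enyi's first axiom, so the intended reading of ``functional association with finite derivative'' is that $h$ is non-degenerate (e.g.\ piecewise strictly monotone), under which $\eta$ is finite and the proof goes through.
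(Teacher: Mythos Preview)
Your argument is correct and tracks the paper's proof closely: both use the bounded derivative to conclude that within any column the curve meets only a bounded number $c_0$ of rows, so $H(Q\mid P)\le\log c_0$ and hence $H(Q\mid P)/\log\ell\to 0$; and both show $H(Q)/\log\ell\to 1$ by relating the bin masses to the distribution of $Y=h(X)$. The only substantive difference is in this last step: the paper writes each bin probability as $g_h'(\alpha_i)/\ell_y$ via the mean value theorem applied to the CDF $g_h$ and then identifies a Riemann sum, whereas you invoke the equivalent high-level asymptotic $H(Q)=\log\ell+\eta+o(1)$ in terms of the differential entropy $\eta$ of $Y$. Your version is in fact more careful on two points the paper glosses over: you explicitly choose $\ell=\ell(n)$ so that $\ell^2<B(n)$ and the per-bin sample counts diverge, and you flag both the empirical-versus-population entropy gap and the non-degeneracy requirement on $h$ (needed so that $\eta>-\infty$, equivalently so that $g_h'$ exists and the paper's mean-value step is legitimate).
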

\begin{proof}[Proof of Proposition \ref{pro:UMIC}]
We denote by $g_h(\alpha)$ the sub-level function of function $h(.)$, i.e., 
\begin{eqnarray}
 g_h(\alpha)=\lambda(\{x:h(x)\leq \alpha\}),
\end{eqnarray}
where $\lambda(\mathbb{T})$ denotes the fraction of sample points in the set $\mathbb{T}$. Consequently
\begin{eqnarray}
 g_h(\alpha)=F_y(\alpha)=\mathbb{P}(y\leq \alpha)=\mathbb{P}(h(x)\leq \alpha),
\end{eqnarray}
where $F_y(.)$ denotes the cumulative distribution function (CDF) and $\mathbb{P}(.)$ denotes the probability function. Using this notation and assuming that we uniformly partition $Y$-axis by $\ell_y$ rows, we can write the entropy of $Q$ which is the uniform partition of Y-axis as 
\begin{align}\label{eq:uniformH(y)}
 &H(Q) \\&=-\sum_{i=0}^{\ell_y-1}\mathbb{P}(Q=i)\log(P(Q=i))\nonumber \\
&=-\sum_{i=0}^{\ell_y-1}\mathbb{P}\left(\frac{i}{\ell_y}\leq Y < \frac{i+1}{\ell_y}\right)\log \left(\mathbb{P}\left(\frac{i}{\ell_y}\leq Y < \frac{i+1}{\ell_y}\right)\right)\nonumber\\
&=-\sum_{i=0}^{\ell_y-1}g'_h(\alpha_i)\frac{1}{\ell_y}\log \left(g'_h(\alpha_i)\frac{1}{\ell_y}\right)\nonumber\\
&=-\sum_{i=0}^{\ell_y-1}\frac{1}{\ell_y}g'_h(\alpha_i)\log(g'_h(\alpha_i))-\sum_{i=0}^{\ell_y-1}\frac{1}{\ell_y}g'_h(\alpha_i)\log\left(\frac{1}{\ell_y}\right)\nonumber,
\end{align}
where $\frac{i}{\ell_y} \leq \alpha_i < \frac{i+1}{\ell_y}$ for each $i$ ($0\leq i \leq \ell_y-1$) is derived according to the mean value Theorem. If without loss of generality we assume that $\min(\ell_x,\ell_y)=\ell_y$ then we can write
\begin{align}\label{eq:H(y) over ly}
 \frac{H(Q)}{\log(\ell_y)}=-\sum_{i=0}^{\ell_y-1}\frac{1}{\ell_y\log(\ell_y)}g'_h(\alpha_i)\log(g'_h(\alpha_i))+\sum_{i=0}^{\ell_y-1}\frac{1}{\ell_y}g'_h(\alpha_i).
\end{align}
As a result, in the asymptotic setting we can write
\begin{align}
 \lim_{\ell_y\to \infty}\frac{H(Q)}{\log(\ell_y)}=\sum_i\frac{1}{\ell_y}g'_h(\alpha_i)=1,
\end{align}
where the last equality holds since $ \lim_{\ell_y \to \infty}\sum_i\frac{1}{\ell_y}g'_h(\alpha_i)$ is the Riemann integral of the function $g'_h(\alpha_i)$.
If we assume that $|h'(.)|<c$, then according to the mean value Theorem we have 
\begin{eqnarray}\label{eq:upperbound-derivative}
 \left|h\left(\frac{i+1}{\ell_y}\right)-h\left(\frac{i}{\ell_y}\right)\right|\leq\frac{c}{\ell_y}.
\end{eqnarray}
Equation (\ref{eq:upperbound-derivative}) states that for a particular column of the X-axis partition, the curve of the function passes through at most $c+1$ cells of that column. We use this fact in upper-bounding the $H(Q|P)$. Similar to (\ref{eq:uniformH(y)}) and (\ref{eq:H(y) over ly}) we have 

\begin{align}\label{eq:uniformH(y|x)}
 \displaybreak[0]
 H(Q|P=k)&=-\sum_{i=0}^{\ell_y-1}\mathbb{P}(Q=i|P=k) \\
&~~~~~~~~~\times\log(\mathbb{P}(Q=i|P=k))\nonumber \\
&=-\sum_{i=0}^{\ell_y-1}\mathbb{P}\left(\frac{i}{\ell_y}\leq Y < \frac{i+1}{\ell_y}|P=k\right) \nonumber \\ 
&~~~~~~~~~\times\log\left(\mathbb{P}\left(\frac{i}{\ell_y}\leq Y < \frac{i+1}{\ell_y}|P=k\right)\right)\nonumber \\
\displaybreak[0]
&=-\sum_{i=0}^{\ell_y-1}f_{y|x}(\alpha_i|P=k)\frac{1}{\ell_y}  \nonumber \\ 
&~~~~~~~~~\times \log\left(f_{y|x}(\alpha_i|P=k)\frac{1}{\ell_y}\right)\nonumber \\
&=-\sum_{i=0}^{\ell_y-1}\frac{1}{\ell_y}f_{y|x}(\alpha_i|P=k) \nonumber \\
&~~~~~~~~~ \times \log(f_{y|x}(\alpha_i|P=k))\nonumber \\
&~~~~~~-\sum_{i=0}^{\ell_y-1}\frac{1}{\ell_y}f_{y|x}(\alpha_i|P=k)\log\left(\frac{1}{\ell_y}\right),\nonumber
\end{align}
where $f_{y|x}$ denotes the conditional probability density function. Because of equation (\ref{eq:upperbound-derivative}), we can simplify (\ref{eq:uniformH(y|x)}) as
\begin{align}
 H(Q|P=k)&=-\sum_{i=j_1}^{j_{c+1}}\frac{1}{\ell_y}f_{y|x}(\alpha_i|P=k) \\
&~~~~~~~~~\times \log(f_{y|x}(\alpha_i|P=k))\nonumber \\
&~~~~~~-\sum_{i=j_1}^{j_{c+1}}\frac{1}{\ell_y}f_{y|x}(\alpha_i|P=k)\log\left(\frac{1}{\ell_y}\right).\nonumber
\end{align}
If we define $k^{*} =\arg\max_{k}H(Q|P=k)$, then since $H(Q|P)=\sum_k\mathbb{P}(P=k)H(Q|P=k)$, we can write
\begin{align}
H(Q|P)&\leq-\sum_{i=j_1}^{j_{c+1}}\frac{1}{\ell_y}f_{y|x}(\alpha_i|P=k^*) \\
&~~~~~~~~~\times \log(f_{y|x}(\alpha_i|P=k^*)\nonumber \\
&~~~~~~-\sum_{i=j_1}^{j_{c+1}}\frac{1}{\ell_y}f_{y|x}(\alpha_i|P=k^*)\log\left(\frac{1}{\ell_y}\right),\nonumber
\end{align}
and hence
\begin{align}
 \lim_{\ell_y \to \infty}\frac{H(Q|P)}{\log(\ell_y)}\leq \sum_{i=j_1}^{j_{c+1}}\frac{1}{\ell_y}f_{y|x}(\alpha_i|P=k^*)=0.
\end{align}
The last equality holds since $\frac{1}{\ell_y} \to 0$ but $c<\infty$.
As a result
\begin{align}
\lim_{\ell_y\rightarrow\infty} \text{U-MIC}(D)&=\frac{I(P;Q)}{\log(\min\{\ell_x,\ell_y\})}\\
&=\frac{H(Q)-H(Q|P)}{\log(\min\{\ell_x,\ell_y\})}=1.\nonumber
\end{align}
\end{proof}

If $x$ and $y$ are independent, then according to the following Proposition we have $\text{U-MIC}(D)=0$.  

\begin{proposition}
If $D=\{(x_i,y_i)\}_{i=1}^n$ where $x_i \independent y_i$ for $1\leq i \leq n$, then $\text{U-MIC}(D)=0$.
\label{pro:UMIC_ind}
\end{proposition}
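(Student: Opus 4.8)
The plan is to exploit a structural feature that distinguishes Algorithm \ref{alg:UMIC} from \textit{OptimizeXAxis}: the partitions $P$ and $Q$ it produces are fixed, deterministic functions of a single coordinate each, rather than data-adaptive objects coupled across the two axes. Concretely, the column label assigned to a point depends only on its $x$-value, through $k=\lfloor \ell_x (x-x_{\min})/(x_{\max}-x_{\min})\rfloor$, and the row label depends only on its $y$-value. Writing $P=P(x)$ and $Q=Q(y)$, the first step is to note that because $x\independent y$ and a (measurable) function of a random variable carries no information beyond that variable, the induced labels satisfy $P(x)\independent Q(y)$.

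The second step converts this independence into a factorization of the cell probabilities. Using the notation of Section \ref{subsec:MICDEF}, independence of $P$ and $Q$ gives $\mathbb{P}(P=k,Q=i)=\mathbb{P}(P=k)\,\mathbb{P}(Q=i)$, i.e. $p_{ki}=p_{k\cdot}\,p_{\cdot i}$ for every cell. Substituting this into the mutual-information expression used in line~3 of Algorithm \ref{alg:UMIC},
\begin{align}
I(P;Q)=\sum_{k,i}p_{ki}\log\frac{p_{ki}}{p_{k\cdot}\,p_{\cdot i}},
\end{align}
makes every summand equal to $\log 1=0$; equivalently $H(P,Q)=H(P)+H(Q)$, so that $I(P;Q)=H(P)+H(Q)-H(P,Q)=0$. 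Since $\ell_x,\ell_y>1$ the normalizing factor $\log(\min(\ell_x,\ell_y))$ is strictly positive and finite, so the normalized score vanishes for \emph{every} admissible grid; taking the maximum over grids then yields $\text{U-MIC}(D)=0$.

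The one delicate point — and the step I expect to require the most care — is the interface between the population-level probabilities and the finite sample. Interpreted exactly as in the proof of Proposition \ref{pro:UMIC}, where $\mathbb{P}(\cdot)$ denotes the underlying distribution, the factorization $p_{ki}=p_{k\cdot}\,p_{\cdot i}$ is exact and the argument above goes through immediately. If instead one reads $p_{ki}=|D_{ki}|/|D|$ as empirical frequencies, the factorization holds only in the limit: by the law of large numbers the empirical cell frequencies converge to the products of the marginals as $n\to\infty$, whence $I(P;Q)\to 0$ and $\text{U-MIC}(D)\to 0$. I would therefore state the result in the population sense to keep it a clean equality, remarking that the finite-sample version holds asymptotically.
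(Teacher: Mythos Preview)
Your proposal is correct and follows essentially the same approach as the paper: both arguments observe that the uniform grid makes $P$ a function of $x$ alone and $Q$ a function of $y$ alone, so $x\independent y$ forces $P\independent Q$ and hence $I(P;Q)=0$. The paper reaches this via $H(Q|P=k)=H(Q)$ for every $k$ (so $H(Q|P)=H(Q)$), whereas you use the equivalent factorization $p_{ki}=p_{k\cdot}p_{\cdot i}$ (so $H(P,Q)=H(P)+H(Q)$); these are the same fact expressed through two different decompositions of mutual information. Your additional remarks---that the conclusion holds for every admissible $(\ell_x,\ell_y)$ before taking the maximum, and the distinction between the population statement and its finite-sample limit---are not in the paper's proof but are worthwhile clarifications.
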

\begin{proof}[Proof of Proposition \ref{pro:UMIC_ind}]
The line of reasoning is straight forward and similar to the proof of Proposition \ref{pro:UMIC}. Since $x$ and $y$ are independent from each other, we can write 
\begin{align}
&H(Q) \\
&=-\sum_{i=0}^{\ell_y-1}\mathbb{P}(Q=i)\log(P(Q=i))\nonumber \\
&=-\sum_{i=0}^{\ell_y-1}\mathbb{P}\left(\frac{i}{\ell_y}\leq Y < \frac{i+1}{\ell_y}\right)\log \left(\mathbb{P}\left(\frac{i}{\ell_y}\leq Y < \frac{i+1}{\ell_y}\right)\right)\nonumber \\
&=-\sum_{i=0}^{\ell_y-1}\mathbb{P}\left(\frac{i}{\ell_y}\leq Y < \frac{i+1}{\ell_y}|P=k\right) \nonumber \\ 
&~~~~~~~~~~~\times\log\left(\mathbb{P}\left(\frac{i}{\ell_y}\leq Y < \frac{i+1}{\ell_y}|P=k\right)\right)\nonumber \\
&=-\sum_{i=0}^{\ell_y-1}\mathbb{P}(Q=i|P=k) \log(\mathbb{P}(Q=i|P=k))\nonumber \\
&=H(Q|P=k).\nonumber
\end{align}
Therefore, $H(Q)=H(Q|P=k)$ for every $k$ where $0\leq k\leq \ell_x-1$. Now since $H(Q|P)=\sum_k\mathbb{P}(P=k)H(Q|P=k)$, we have $H(Q)=H(Q|P)$ and as a result U-MIC($D$)=0.
\end{proof}

\subsection{Noisy Setting}
In this section we study performance of the U-MIC in noisy setting. We first give a lower-bound on it when the two variables $x$ and $y$ have a noisy functional association in which the noise is bounded. After that, we study the case of unbounded noise.

For the bounded noise case, without loss of generality we assume that $x\sim U[0,1]$ and the noise has a uniform distribution. Specifically, we assume that sample points $(x_i,y_i)$ have the form $(x_i,h(x_i)+z_\epsilon)$ where $z_\epsilon \sim U[-\epsilon,\epsilon]$. We define $y_{\rm mid} = \frac{y_{\max}+y_{\min}}{2}$. In Algorithm \ref{alg:UMIC}, we divide the Y-axis into two rows by drawing a horizontal line at $y_{\rm mid}$. In addition, we divide the X-axis into $\ell_x$ columns each having the length $\frac{1}{\ell_x}$ (since $x\sim U[0,1]$). Let $D_1 = \{ (x_i,y_i)|y_i<y_{\rm mid} \}$ and $D_2 = \{(x_i,y_i)|y_i>y_{\rm mid} \}$. We use $\mathcal{P}$ and $\mathcal{Q}$ to denote the partition of X-axis and Y-axis of the grid in this setting. Having this setting and notations in mind, the following Corollary gives a simple lower-bound for U-MIC($D$) in this case.

\begin{corollary}
 Let $m$ be the number of columns in $\mathcal{P}$ in which there exists a sample point $(\hat{x},\hat{y})$ such that $|\hat{y}-y_{\rm mid}|\leq \epsilon$. Then, U-MIC($D$) is lower-bounded by
\begin{eqnarray}
\frac{|D|\log(|D|)-|D_1|\log(|D_1|)-|D_2|\log(|D_2|)}{|D|}- \frac{m}{\ell_x}. \nonumber
\label{equ:boundUMIC}
\end{eqnarray}
\label{pro:UMICNoisy}
\end{corollary}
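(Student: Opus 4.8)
The plan is to exploit the fact that, just like the MIC, the U-MIC is a maximum of the normalized mutual information over admissible grid sizes; hence any single uniform grid already furnishes a lower bound. I would therefore evaluate the score on exactly the grid prescribed in the statement: split the $Y$-axis into the two rows $D_1$ and $D_2$ at $y_{\rm mid}$ (so $\ell_y=2$) and cut the $X$-axis into $\ell_x$ equal columns of width $1/\ell_x$. Taking $\log=\log_2$ as is standard in information theory, the normalizer is $\log_2(\min(\ell_x,2))=\log_2 2=1$ for every $\ell_x\ge 2$, so that
\[
\text{U-MIC}(D)\ \ge\ \frac{I(\mathcal{P};\mathcal{Q})}{\log_2 2}\ =\ H(\mathcal{Q})-H(\mathcal{Q}\mid\mathcal{P}).
\]
The corollary then reduces to two tasks: computing $H(\mathcal{Q})$ exactly and upper-bounding $H(\mathcal{Q}\mid\mathcal{P})$ by $m/\ell_x$.

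For the first task, the row partition $\mathcal{Q}$ is binary with masses $|D_1|/|D|$ and $|D_2|/|D|$. Expanding the binary entropy gives $H(\mathcal{Q})=-\tfrac{|D_1|}{|D|}\log\tfrac{|D_1|}{|D|}-\tfrac{|D_2|}{|D|}\log\tfrac{|D_2|}{|D|}$, and substituting $|D_1|+|D_2|=|D|$ collapses the $\log|D|$ terms, yielding $H(\mathcal{Q})=\big(|D|\log|D|-|D_1|\log|D_1|-|D_2|\log|D_2|\big)/|D|$, which is precisely the first term of the claimed bound.

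For the second task, I would write $H(\mathcal{Q}\mid\mathcal{P})=\sum_k \mathbb{P}(\mathcal{P}=k)\,H(\mathcal{Q}\mid\mathcal{P}=k)$ and argue that a column contributes a nonzero summand only when it holds sample points on both sides of $y_{\rm mid}$, since otherwise the conditional distribution over the two rows is degenerate and its entropy vanishes. Because each sample obeys $y_i=h(x_i)+z_\epsilon$ with $|z_\epsilon|\le\epsilon$, a column whose points are \emph{not} all on one side of the line must contain a point within $\epsilon$ of $y_{\rm mid}$; by definition there are exactly $m$ such columns. Each surviving term is a binary conditional entropy, hence $H(\mathcal{Q}\mid\mathcal{P}=k)\le\log_2 2=1$, while the uniform column widths together with $x\sim U[0,1]$ make each column carry mass $\mathbb{P}(\mathcal{P}=k)=1/\ell_x$. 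Summing over the $m$ relevant columns gives $H(\mathcal{Q}\mid\mathcal{P})\le m/\ell_x$, and combining this with the exact value of $H(\mathcal{Q})$ produces the stated lower bound.

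The step I expect to be the main obstacle is the identification of the contributing columns with the $m$ columns of the statement: translating ``has points on both sides of $y_{\rm mid}$'' into ``contains a sample point within $\epsilon$ of $y_{\rm mid}$'' relies on the boundedness $|z_\epsilon|\le\epsilon$ and, if one tracks constants carefully, produces a slack (a window of width up to $2\epsilon$ rather than $\epsilon$) that must either be absorbed into the definition of $m$ or removed by a continuity argument on $h$ together with the narrowness of a column. A secondary subtlety is the equality $\mathbb{P}(\mathcal{P}=k)=1/\ell_x$, which holds exactly only under perfect equipartition of the uniform $x$-marginal; for finite $n$ it is an idealization, so the bound is cleanest read in the large-sample regime where the empirical column masses concentrate at $1/\ell_x$.
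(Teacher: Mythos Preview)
Your proposal is correct and follows essentially the same route as the paper: compute $H(\mathcal{Q})$ exactly from the two row masses, then bound $H(\mathcal{Q}\mid\mathcal{P})\le m/\ell_x$ by using $\mathbb{P}(\mathcal{P}=k)=1/\ell_x$ from $x\sim U[0,1]$, observing that columns outside the set $\mathcal{M}$ contribute zero conditional entropy, and capping each remaining binary entropy by $1$. Your explicit remark that the normalizer equals $\log_2 2=1$ is a useful clarification the paper leaves implicit, and the two subtleties you flag (the ``within $\epsilon$'' identification and the exactness of $\mathbb{P}(\mathcal{P}=k)=1/\ell_x$) are precisely the points the paper asserts tersely via its labels (a) and (b) without further justification, so your level of rigor already matches or exceeds the original.
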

\begin{proof}[Proof of Corollary~\ref{pro:UMICNoisy}]
Since $I(\mathcal{P},\mathcal{Q})=H(\mathcal{Q})-H(\mathcal{Q}|\mathcal{P})$, we need to have an upper-bound on $H(\mathcal{Q}|\mathcal{P})$ in order to determine a lower-bound on $I(\mathcal{P},\mathcal{Q})$. According to the entropy definition we can write
\begin{align}\label{equ:entropy}
H(\mathcal{Q}) &= -\frac{|D_1|}{|D|}\log\left(\frac{|D_1|}{|D|}\right)-\frac{|D_2|}{|D|}\log\left(\frac{|D_2|}{|D|}\right) \\
&=\frac{|D|\log(|D|)-|D_1|\log(|D_1|)-|D_2|\log(|D_2|)}{|D|}. \nonumber
\end{align}
Let $\mathcal{M}=\{l_{p_1},\ldots,l_{p_m}\}$ denote the columns in which there exists a data point $(\hat{x},\hat{y})$ such that $|\hat{y}-y_{\rm mid}|\leq \epsilon$. Since $\mathcal{Q}$ has only two rows, we can upper-bound the $H(\mathcal{Q}|\mathcal{P})$ as the following
\begin{align}\label{equ:upp}
H(\mathcal{Q}|\mathcal{P})&=\sum_{k=0}^{\ell_x-1}\mathbb{P}(\mathcal{P}=k)H(\mathcal{Q}|\mathcal{P}=k) \\
&\overset{(a)}{=}\frac{1}{\ell_x}\left( \sum_{k \in \mathcal{M}} H(\mathcal{Q}|\mathcal{P}=k) +  \sum_{k \notin \mathcal{M}} H(\mathcal{Q}|\mathcal{P}=k)\right) \nonumber \\
&\overset{(b)}{=}\frac{1}{\ell_x} \sum_{k \in \mathcal{M}} H(\mathcal{Q}|\mathcal{P}=k) \nonumber \\
&\leq \frac{|\mathcal{M}|}{\ell_x}=\frac{m}{\ell_x}, \nonumber
\end{align}
where (a) holds since $x\sim U[0,1]$ and (b) holds because $z_\epsilon \sim U[-\epsilon,\epsilon]$. The lower-bound is then derived by combining \eqref{equ:entropy} and \eqref{equ:upp}.
\end{proof}

The main issue with generalizing this lower-bounding idea to other noise distributions is that noise values could be unbounded. Hence, we use the idea of \textit{$k$-nearest neighbors} to bound the noise so as to come up with a consistent version of the association detector. We study this idea for the case that noise is drawn from a Gaussian distribution with 0 mean and variance of $\sigma^2$.

For each sample point, we consider its $\delta_n$-neighborhood (we use subscript $n$ to show the dependency on the size of the dataset $n$). We replace each data point with the average of sample points located in its $\delta_n$-neighborhood. The following lemma characterizes the number of sample points in this neighborhood.

\begin{figure}[t]
 \centering
 \includegraphics[width=50mm]{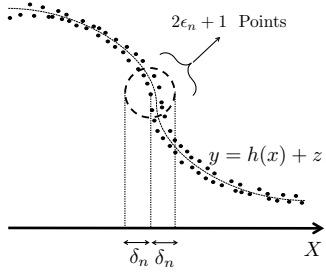}
 \caption{Using \textit{k-nearest neighbors} method to bound the noise in noisy relationships. We replace each point with the average of its neighbors in its $\delta_n$-neighborhood.}
\label{fig:kNearestNeighbors}
\end{figure}

\begin{lemma}
Let $x$ be uniformly distributed, i.e., $x \sim U[0,1]$ and $(x_i,y_i)$ denote the $i$-th data point in $D$ where $y_i=h(x_i)+z_i$. If $\mathcal{N}=\{(x_j,y_j)|(x_i-x_j)^2\leq \delta_n^2\}$, then $\lim_{n \to \infty}|\mathcal{N}|=2n\delta_n$.
\label{lem:neighbor}
\end{lemma}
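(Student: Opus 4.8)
The plan is to recognize $|\cN|$ as a sum of independent Bernoulli indicators and then invoke a law-of-large-numbers argument. First I would rewrite the defining condition $(x_i-x_j)^2\le\delta_n^2$ as $|x_i-x_j|\le\delta_n$, so that the neighborhood consists exactly of the sample points whose $x$-coordinate lies in the interval $[x_i-\delta_n,\,x_i+\delta_n]$. For a fixed reference index $i$, the inclusion of each other point is the event $\{x_j\in[x_i-\delta_n,\,x_i+\delta_n]\}$, and since the $x_j\sim U[0,1]$ are independent, these events are i.i.d.\ Bernoulli with success probability $p_n=\mathbb{P}(|x_i-x_j|\le\delta_n)$. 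Notice that the $y$-coordinates (hence the noise $z_j$) play no role: the neighborhood is defined purely through the $x$-distances, so the problem reduces to counting uniform samples in a short interval.

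Second, I would compute $p_n$. For an interior reference point, i.e.\ $\delta_n\le x_i\le 1-\delta_n$, the interval lies entirely inside $[0,1]$ and has length $2\delta_n$, so $p_n=2\delta_n$ exactly; for reference points within $\delta_n$ of an endpoint the interval is truncated and $p_n<2\delta_n$. Since we work in the regime $\delta_n\to 0$ (with $n\delta_n\to\infty$, so that the neighborhoods stay populated), the fraction of reference points for which truncation occurs is $O(\delta_n)$ and the boundary correction is asymptotically negligible. Writing $|\cN|=1+\sum_{j\ne i}\mathbf{1}\{\,|x_i-x_j|\le\delta_n\,\}$, the sum is $\mathrm{Binomial}(n-1,p_n)$, whence $\mathbb{E}|\cN|=1+(n-1)p_n=2n\delta_n\,(1+o(1))$, the $+1$ and the $-2\delta_n$ terms both being of lower order than $2n\delta_n$ precisely because $n\delta_n\to\infty$.

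Third, I would establish concentration so that the random count is close to its mean. Because the variance of a binomial satisfies $\mathrm{Var}|\cN|=(n-1)p_n(1-p_n)\le 2n\delta_n$, the standard deviation is $O(\sqrt{n\delta_n})$, which is of strictly lower order than the mean $2n\delta_n$ exactly in the regime $n\delta_n\to\infty$. Chebyshev's inequality then yields $|\cN|/(2n\delta_n)\to 1$ in probability, which is the precise meaning one must attach to the stated limit $\lim_{n\to\infty}|\cN|=2n\delta_n$.

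The main obstacle here is interpretive rather than computational: the right-hand side $2n\delta_n$ depends on $n$, so the assertion has to be read as an asymptotic equivalence $|\cN|\sim 2n\delta_n$ rather than convergence to a fixed number, and it holds only under a growth condition on $\delta_n$ (namely $\delta_n\to 0$ together with $n\delta_n\to\infty$) that fixes the neighborhood at the correct scale. Making this regime explicit, and cleanly discarding the $O(\delta_n)$ fraction of boundary reference points, is the delicate part; once $|\cN|$ is identified as a binomial count the remaining estimates are routine.
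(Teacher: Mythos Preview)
Your proposal is correct and follows essentially the same route as the paper: write $|\cN|$ as a sum of i.i.d.\ Bernoulli indicators, identify the mean as $2n\delta_n$, and apply a concentration inequality. The only differences are cosmetic---the paper uses Hoeffding's inequality with $t=1/\log n$ rather than Chebyshev, and it does not spell out the regime $\delta_n\to 0$, $n\delta_n\to\infty$ or the boundary correction that you (rightly) make explicit.
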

\begin{proof}[Proof of Lemma~\ref{lem:neighbor}]
Let $\mathbb{I}(.)$ denote the indicator function. Then we can write
\begin{eqnarray}
2\epsilon_n=|\mathcal{N}|=\sum_{j=1}^{n}\mathbb{I}(x_i-\delta_n\leq x_j\leq x_i+\delta_n).
\end{eqnarray}

As a result $\mathbb{E}[2\epsilon_n]=2n\delta_n$. Using the Hoeffding inequality we have
\begin{eqnarray}
 \mathbb{P}(|2\epsilon_n-\mathbb{E}[2\epsilon_n]|\geq t) \leq 2e^{-2ct^2n^2},
\end{eqnarray}
for some constant $c$. If we let $t=\frac{1}{\log n}$, then $\lim_{n \to \infty}(\epsilon_n)=n\delta_n$ or equivalently,  $\lim_{n \to \infty}|\mathcal{N}|=2n\delta_n$.
\end{proof}
Assume that $h(.)$ is a Lipschitz continuous function of order $\beta$, i.e.,$~|h(v)-h(w)|\leq k|v-w|^{\beta}$ where $k$ is a constant depends on the function $h(.)$. If we estimate (or replace) the y-value of each noisy sample point with the average of sample points in its $\delta_n$-neighborhood, in the case of Gaussian noise (0 mean and variance of $\sigma^2$) we can write the estimation mean squared error as 
\begin{align}
 \Delta_n&=\frac{1}{n}\sum_{i=1}^n\mathbb{E}(\bar{h}(x_i)-h(x_i))^2\\
&=\frac{1}{n}\sum_{i=1}^n\mathbb{E}\left[\frac{\sum_{j=-\epsilon_n}^{\epsilon_n}(h(x_{i-j})+z_{i-j})}{2\epsilon_n+1}-h(x_i)\right]^2 \nonumber \\ &\leq \frac{k^2\epsilon_n^{2\beta}}{n^{2\beta}}+\frac{\sigma^2}{2\epsilon_n+1}. \nonumber
\end{align}
In order to minimize the estimation error we can take derivative with respect to $\epsilon_n$ and set it to 0. Therefore, $\epsilon_n$ which minimizes the mean squared error is 
\begin{eqnarray}
 \epsilon_n^*=\sqrt[2\beta+1]{\frac{\sigma^2}{4k^2\beta}}n^{1-\frac{1}{2\beta+1}}.
\label{equ:optimalNeighbors}
\end{eqnarray}
We use this $ \epsilon_n^*$ later to to bound the noise. The following lemma gives a probabilistic bound on the noise values.

\begin{lemma}
 If $z_1,z_2,\ldots,z_n$ are i.i.d. drawn from $N(0,\sigma^2)$, then $\mathbb{P}\{\max_{1\leq i \leq n} |z_i|>t\}\leq 2ne^{\frac{-t^2}{2\sigma^2}}$.
\label{lem:boundGaussian}
\end{lemma}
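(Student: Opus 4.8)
The plan is to reduce the maximum over $n$ i.i.d.\ Gaussian variables to a single-variable tail estimate via the union bound, and then to control that single tail with a standard Chernoff argument. First I would observe that the event $\{\max_{1\leq i\leq n}|z_i|>t\}$ is exactly the union of the events $\{|z_i|>t\}$, so by the union bound together with the fact that the $z_i$ are identically distributed,
\begin{eqnarray}
\mathbb{P}\left\{\max_{1\leq i\leq n}|z_i|>t\right\}\leq\sum_{i=1}^n\mathbb{P}\{|z_i|>t\}=n\,\mathbb{P}\{|z_1|>t\}.
\end{eqnarray}
This step uses no independence at all, only subadditivity of probability and the common marginal law, so it is entirely routine.

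Next I would exploit the symmetry of the centered Gaussian to write $\mathbb{P}\{|z_1|>t\}=2\,\mathbb{P}\{z_1>t\}$, reducing the problem to a one-sided tail bound. For the one-sided bound I would invoke the Chernoff method: for any $\lambda>0$,
\begin{eqnarray}
\mathbb{P}\{z_1>t\}\leq e^{-\lambda t}\,\mathbb{E}\left[e^{\lambda z_1}\right]=e^{-\lambda t}e^{\lambda^2\sigma^2/2},
\end{eqnarray}
where the equality is just the moment generating function of $N(0,\sigma^2)$. Minimizing the exponent $-\lambda t+\lambda^2\sigma^2/2$ over $\lambda$ by setting its derivative to zero gives the optimizer $\lambda=t/\sigma^2$ and hence $\mathbb{P}\{z_1>t\}\leq e^{-t^2/(2\sigma^2)}$.

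Combining these ingredients yields $\mathbb{P}\{|z_1|>t\}\leq 2e^{-t^2/(2\sigma^2)}$, and substituting back into the union-bound inequality produces the claimed $2n\,e^{-t^2/(2\sigma^2)}$. Since every step is a textbook manipulation, I do not anticipate a genuine obstacle; the only point that deserves mild care is the Chernoff optimization, where one should note that the minimizing $\lambda=t/\sigma^2$ is positive precisely when $t>0$ --- the only regime of interest, since the bound is trivially vacuous otherwise --- so that the Markov inequality underlying the Chernoff bound is legitimately applied with a positive exponential parameter.
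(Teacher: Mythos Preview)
Your proof is correct and follows the same overall architecture as the paper: bound the single-variable tail $\mathbb{P}\{|z_1|>t\}$ by $2e^{-t^2/(2\sigma^2)}$ and then apply the union bound over the $n$ coordinates. The only genuine difference is in how the one-sided tail $\mathbb{P}\{z_1>t\}\leq e^{-t^2/(2\sigma^2)}$ is obtained. You use the Chernoff route, bounding $\mathbb{P}\{z_1>t\}\leq e^{-\lambda t}\mathbb{E}[e^{\lambda z_1}]$ and optimizing over $\lambda$. The paper instead works directly with the Gaussian density: after the change of variables $u=z-t$ it writes
\[
\int_t^{\infty}\frac{1}{\sqrt{2\pi\sigma^2}}e^{-z^2/(2\sigma^2)}\,dz
= e^{-t^2/(2\sigma^2)}\int_0^{\infty}\frac{1}{\sqrt{2\pi\sigma^2}}e^{-(u^2+2ut)/(2\sigma^2)}\,du
\leq e^{-t^2/(2\sigma^2)},
\]
since the remaining integrand is dominated by the half-Gaussian integral, which is at most $1$. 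Both arguments yield the identical constant; your Chernoff version has the advantage of extending verbatim to any sub-Gaussian law, while the paper's computation is slightly more self-contained in that it avoids the moment generating function altogether.
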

\begin{proof}[Proof of Lemma \ref{lem:boundGaussian}]
First of all, for a zero mean Gaussian random variable $z_i$ we prove that $\mathbb{P}\{ |z_i|>t\}\leq 2ne^{\frac{-t^2}{2\sigma^2}}$. Let $u=z_i-t$ and hence $u\thicksim N(-t,\sigma^2)$. We have
\begin{eqnarray}
\int_0^{\infty}\frac{1}{\sqrt{2\pi\sigma^2}}e^{\frac{-u^2-2ut}{2\sigma^2}}du\leq\int_0^{\infty}\frac{1}{\sqrt{2\pi\sigma^2}}e^{\frac{-u^2}{2\sigma^2}}du\leq1.
\end{eqnarray}
As a result we can write
\begin{eqnarray}
 \int_t^{\infty}\frac{1}{\sqrt{2\pi\sigma^2}}e^{\frac{-z_i^2}{2\sigma^2}}dz_i=\int_0^{\infty}\frac{1}{\sqrt{2\pi\sigma^2}}e^{\frac{-(u-t)^2}{2\sigma^2}}du\leq e^{\frac{-t^2}{2\sigma^2}}.
\label{equ:boundGaussian}
\end{eqnarray}
Similarly, (\ref{equ:boundGaussian}) holds for  $[-\infty,-t]$ and hence $\mathbb{P}\{ |z_i|>t\}\leq 2e^{\frac{-t^2}{2\sigma^2}}$. The result of lemma then follows from using union-bound on the $z_i$s. 
\end{proof}
By using the $k$-nearest neighbors method, each $z_i$ is replaced by $\bar{z}_i$ which is the average of $2\epsilon_n+1$ i.i.d. noise values and hence its variance is decreased by $2\epsilon_n+1$. This idea motivates the following corollary which lets us to bound the noise.
\begin{corollary}
By using the $k$-nearest neighbors method, $\bar{z}_i=\frac{\sum_{j=-\epsilon_n}^{\epsilon_n}z_{i-j}}{2\epsilon_n+1}$, and as a result $\lim_{n \to \infty}\max_{1\leq i \leq n}|\bar{z}_i| = 0$.
\label{Cor:limNoise}
\end{corollary}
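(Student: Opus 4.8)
The plan is to apply Lemma~\ref{lem:boundGaussian} not to the raw noise $z_i$ but to the averaged noise $\bar{z}_i$, exploiting the variance reduction produced by the $k$-nearest neighbors averaging. First I would observe that each $\bar{z}_i=\frac{\sum_{j=-\epsilon_n}^{\epsilon_n}z_{i-j}}{2\epsilon_n+1}$ is a linear combination of i.i.d.\ $N(0,\sigma^2)$ variables, hence itself a zero-mean Gaussian, now with the reduced variance $\sigma_n^2=\frac{\sigma^2}{2\epsilon_n+1}$. The crucial point is that the proof of Lemma~\ref{lem:boundGaussian} controls $\mathbb{P}\{\max_i|\cdot|>t\}$ purely through the marginal Gaussian tail together with a union bound, and never invokes independence of the entries. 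Therefore the lemma applies verbatim to the $\bar{z}_i$ even though adjacent ones share summands and are correlated, yielding
\begin{eqnarray}
\mathbb{P}\Big\{\max_{1\le i\le n}|\bar{z}_i|>t\Big\}\le 2n\,e^{-\frac{t^2(2\epsilon_n+1)}{2\sigma^2}}.
\end{eqnarray}

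Next I would substitute the optimal neighborhood size $\epsilon_n^*$ from \eqref{equ:optimalNeighbors}. Since $\epsilon_n^*$ scales like $n^{1-\frac{1}{2\beta+1}}=n^{\frac{2\beta}{2\beta+1}}$, the factor $2\epsilon_n^*+1$ grows polynomially in $n$ with a strictly positive exponent. Absorbing $t$, $\sigma$, and the constant in $\epsilon_n^*$ into a single positive constant $c$, the right-hand side above then behaves like $2n\exp\!\left(-c\,n^{\frac{2\beta}{2\beta+1}}\right)$. Taking logarithms, $\log(2n)-c\,n^{\frac{2\beta}{2\beta+1}}\to-\infty$ because the polynomial term dominates $\log n$, so for every fixed $t>0$ the probability tends to $0$. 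This gives $\max_{1\le i\le n}|\bar{z}_i|\to 0$ in probability; since the resulting tail bounds are in fact summable in $n$, a Borel--Cantelli argument upgrades this to almost-sure convergence, establishing the claimed limit of Corollary~\ref{Cor:limNoise}.

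The main obstacle is conceptual rather than computational: one must recognize that the $\bar{z}_i$ are dependent (their averaging windows overlap), so a naive appeal to independence of the maxima is unavailable, and that what really matters is the \emph{rate} at which $\epsilon_n^*$ grows. The argument succeeds precisely because $\epsilon_n^*$ increases polynomially, fast enough that the exponential variance-reduction factor $e^{-t^2(2\epsilon_n+1)/(2\sigma^2)}$ overwhelms the linear prefactor $2n$ counting the candidates for the maximum; had $\epsilon_n^*$ grown only logarithmically, the union bound would be too weak and the maximum need not vanish. I would therefore emphasize the interplay between the growth rate dictated by \eqref{equ:optimalNeighbors} and the marginal-plus-union-bound structure of Lemma~\ref{lem:boundGaussian} as the heart of the proof.
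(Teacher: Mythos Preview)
Your proposal is correct and follows essentially the same route as the paper: apply Lemma~\ref{lem:boundGaussian} to the averaged noise $\bar z_i$ with reduced variance $\sigma^2/(2\epsilon_n+1)$, then invoke the polynomial growth of $\epsilon_n^*$ from \eqref{equ:optimalNeighbors} to make the union bound vanish. The only cosmetic difference is that the paper takes a shrinking threshold $t=\tfrac{1}{\log n}$ to conclude directly, whereas you fix $t>0$ and finish with Borel--Cantelli; your explicit remark that the union bound does not require independence of the $\bar z_i$ is a helpful clarification the paper omits.
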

\begin{proof}[Proof of Corollary \ref{Cor:limNoise}] 
According to the Lemma \ref{lem:boundGaussian}, we can write $\mathbb{P}\{\max_{1\leq i \leq n} |\bar{z}_i|>t\}\leq 2ne^{\frac{-t^2(2\epsilon_n+1)}{2\sigma^2}}$. The result then follows from letting $t=\frac{1}{\log n}$ and $\epsilon_n=\epsilon_n^*$ which was derived in (\ref{equ:optimalNeighbors}).
\end{proof}
In the next section we show how the U-MIC works in practice comparing to the MIC.
\section{Simulation Results}\label{sec:simul}

In this section, we study the performance of our proposed measure. We first show how it works for functional associations. Second, we study its performance for non-functional associations. Finally, we do some experiments for the case of noisy relationships. As mentioned previously, the authors in \cite{MIC:2011} apply a heuristic to compute the MIC which may not result in the true MIC. On the other hand, we do not apply any heuristic in the simulation results in order to have a precise comparison with our proposed method.

\begin{figure}[t]
\centering
\includegraphics[width=90mm]{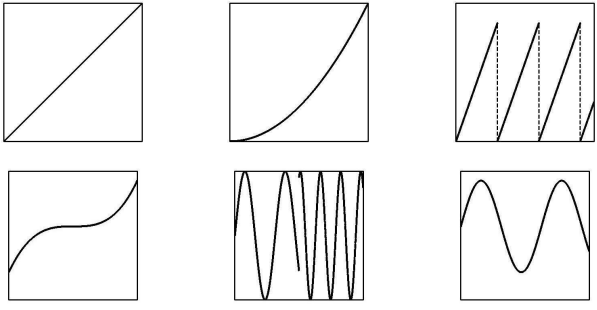}
\caption{Test functional relationships for Tables \ref{tab:funcs1},\ref{tab:funcs2}, and \ref{tab:timefuncs}.}
\label{fig:funcs}
\end{figure}

\begin{figure}[t]
\centering
\includegraphics[width=60mm]{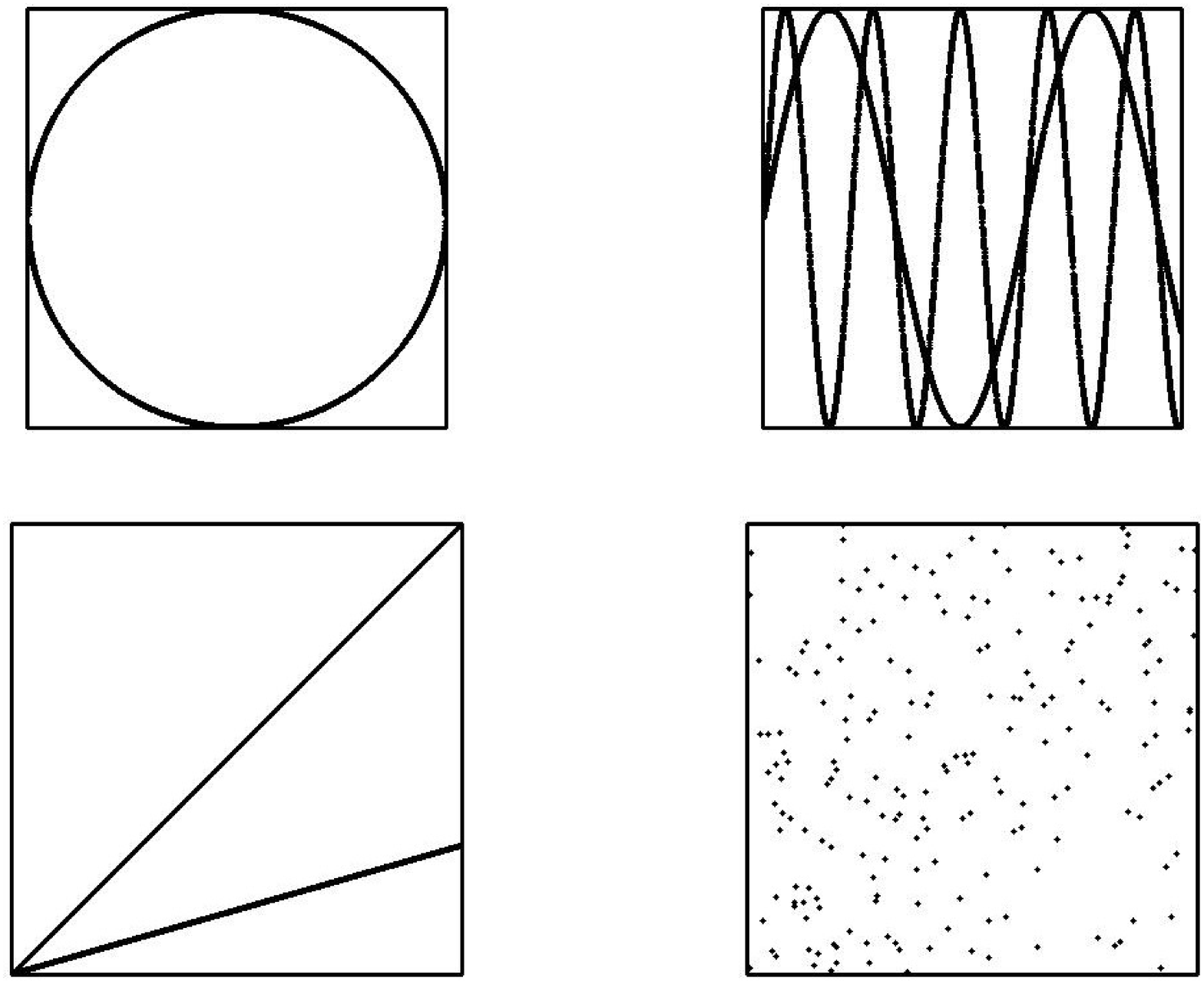}
\caption{Test non-functional relationships for Tables \ref{tab:non-funcs} and \ref{tab:timenon-funcs}. }
\label{fig:non-funcs}
\end{figure}

Figure \ref{fig:funcs} shows the functional associations that we have tested the performance of the MIC and U-MIC algorithms on. Table \ref{tab:funcs1} summarizes the results for the case that there are 200 sample points. One interesting point in Table \ref{tab:funcs1} is the value of the U-MIC for sinusoidal function with different frequencies. As we can see, MIC($D$)=1 while U-MIC($D$)=0.75 for this function. One interpretation of this difference is that in the proof of Proposition \ref{pro:UMIC}, we have assumed that the absolute value of derivative of function $h(.)$ is upper-bounded by constant $c$. However, this is not the case for sinusoidal function with different frequencies since there is a discontinuity in this function. If we increase the sample size, as reported in Table \ref{tab:funcs2}, this issue is alleviated as we can see.

\begin{table}[ht!]
\begin{center}
\resizebox{\columnwidth}{!}{
\begin{tabular}{|c || c  c  c  c  c  c | } 
 \hline
 &  Linear &  Parabolic & Periodic &  Cubic  &  Sin (Diff. Freq.) & Sin (Single Freq.) \\ [0.5ex] 
 \hline\hline
 \textbf{MIC}& 1 & 1 &1 & 1 & 1 & 1 \\ 
 \hline
 \textbf{U-MIC}& 1 &1 & 0.93 & 0.95 & 0.75  & 0.91 \\ 
 \hline
 \end{tabular}}
\caption{ MIC($D$) and U-MIC($D$) for different functional relationships in Figure \ref{fig:funcs}. For this set of experiments, $|D|=200$.}
\label{tab:funcs1}
\end{center}
\end{table}

\begin{table}[ht!]
\begin{center}
\resizebox{\columnwidth}{!}{
\begin{tabular}{|c || c  c  c  c  c  c | } 
 \hline
 &  Linear &  Parabolic & Periodic &  Cubic  &  Sin (Diff. Freq.) & Sin (Single Freq.) \\ [0.5ex] 
 \hline\hline
 \textbf{MIC}& 1 & 1 &1 & 1 & 1 & 1 \\ 
 \hline
 \textbf{U-MIC}& 1 &1 & 0.99 & 0.99 & 0.93  & 0.95 \\ 
 \hline
 \end{tabular}}
\caption{ MIC($D$) and U-MIC($D$) for different functional relationships in Figure \ref{fig:funcs}. For this set of experiments, $|D|=5000$.}
\label{tab:funcs2}
\end{center}
\end{table}

\begin{table}[ht!]
\begin{center}
\resizebox{\columnwidth}{!}{
\begin{tabular}{|c || c  c  c  c  c  c | } 
 \hline
 &  Linear &  Parabolic & Periodic &  Cubic  &  Sin (Diff. Freq.) & Sin (Single Freq.) \\ [0.5ex] 
 \hline\hline
 \textbf{MIC}& 0.1 & 0.5 &0.1 & 0.2 & 2 & 0.4 \\ 
 \hline
 \textbf{U-MIC}& 0.01 &0.01 & 0.01 & 0.01 & 0.01  & 0.01 \\ 
 \hline
 \end{tabular}}
\caption{ Run time (in sec.) for calculation of MIC($D$) and U-MIC($D$) for different functional relationships in Figure \ref{fig:funcs}. For this set of experiments, $|D|=200$.}
\label{tab:timefuncs}
\end{center}
\end{table}

Although the same issue holds for periodic function in Figure \ref{fig:funcs}, we do not see that much effect. Qualitatively, the derivative of continuous pieces of the periodic function in Figure \ref{fig:funcs} $(y=x)$ is smaller than the maximum of the derivative of sinusoidal function with different frequencies $(y=\sin(10x),~y=\sin(20x))$. Hence, if we uniformly partition the X-axis in the case of periodic function, there would be fewer sample points in rows of a certain column and more probably higher entropy (resulting in higher U-MIC), as the case in Table \ref{tab:funcs1}. 

Table \ref{tab:timefuncs} summarizes the runtime for calculation of the MIC and U-MIC for different functional associations in Figure \ref{fig:funcs}. As we can see, the U-MIC is at least 10 times faster in these cases. This is expected since the MIC uses dynamic programming to find a close to optimal grid for the data while the U-MIC just uniformly partitions the axes.

\begin{table}[ht!]
\begin{center}
\resizebox{\columnwidth}{!}{
\begin{tabular}{|c || c  c  c  c | } 
 \hline
 &  Circle &  Sinusoidal Mixture & Two Lines &  Random  \\ [0.5ex] 
 \hline\hline
 \textbf{MIC}& 0.68 &0.72 &0.71 & 0.16  \\ 
 \hline
 \textbf{U-MIC}& 0.64 &0.69 & 0.68 & 0.06 \\ 
 \hline
\end{tabular}}
\caption{ MIC($D$) and U-MIC($D$) for different non-functional relationships in Figure \ref{fig:non-funcs}. For this set of experiments, $|D|=200$.}
\label{tab:non-funcs}
\end{center}
\end{table}

\begin{table}[ht!]
\begin{center}
\resizebox{\columnwidth}{!}{
\begin{tabular}{|c || c  c  c  c | } 
 \hline
 &  Circle &  Sinusoidal Mixture & Two Lines &  Random  \\ [0.5ex] 
 \hline\hline
 \textbf{MIC}& 26.38 &13.41 & 16.60 & 61.00  \\ 
 \hline
 \textbf{U-MIC}& 0.01 & 0.02 &  0.01 & 0.02 \\ 
 \hline
\end{tabular}}
\caption{ Run time (in sec.) for calculation of MIC($D$) and U-MIC($D$) for different non-functional relationships in Figure \ref{fig:non-funcs}. For this set of experiments, $|D|=200$.}
\label{tab:timenon-funcs}
\end{center}
\end{table}

Table \ref{tab:non-funcs} summarizes the results for non-functional associations presented in Figure \ref{fig:non-funcs}. One important point about Table \ref{tab:non-funcs} is that the U-MIC has a better performance in the case of random sample points (i.e., $x \independent y$). In this case, the ideal MIC and U-MIC is 0; however, as we can see MIC($D$)=0.16 and U-MIC(D)=0.06. This issue is related to one of the criticisms made about MIC in the literature \cite{simon2014comment}. One of the drawbacks of the MIC is the fact that as a statistical test it has a lower power than other measures of dependency such as distance correlation \cite{simon2014comment}. In other words, it gives more false positives in detecting associations. However, according to our simulation results and Proposition \ref{pro:UMIC_ind} this issue is alleviated in the U-MIC.

Table \ref{tab:timenon-funcs} shows the runtime for calculation of the MIC and U-MIC. In the case of non-functional relationships we have more clumps in the initial grid of sample points for calculation of the MIC. Hence, Algorithm \ref{alg:OptimizeXAxis} which is basically running dynamic programming over initial grid to find the optimal grid, would have larger runtime as we can see in Table \ref{tab:timenon-funcs}. On the other hand, since the U-MIC is dealing with uniform partitioning of the grid of sample points, it does not matter what type of relationship the two random variables have. The runtime is almost constant and similar to the cases that there is a functional association between two variables.

Tables \ref{tab:noisy-non-funcs} and \ref{tab:noisy-timenon-funcs} summarize the results for noisy non-functional associations presented in Figure  \ref{fig:noisy-non-funcs}. Figure \ref{fig:noisy-non-funcs} is similar to Figure \ref{fig:non-funcs} except for the fact that we have added noise drawn from uniform distribution, i.e., $U[-0.5,0.5]$ to the sample points. Comparing Table \ref{tab:noisy-non-funcs}  with Table \ref{tab:non-funcs}, we can see that the range of decrease for different associations is almost the same for the both MIC and U-MIC. We expected this for MIC since it has an important property called equitability \cite{MIC:2011}. On the other hand, we can observe that at least according to the simulation results reported here, U-MIC has approximately the same equitability property.
\section{Conclusion}\label{sec:con}
In this paper we introduced a novel measure of dependency between two variables. This measure is called the uniform maximal information coefficient (U-MIC) because it is a modification of the original MIC \cite{MIC:2011}. It is derived from uniform partitioning of the both $X$ and $Y$ axes. Therefore, it is not dealing with dynamic programming similar to what the MIC does and hence, is much faster. We proved that asymptotically, U-MIC equals to 1 if there is a functional relationship between two variables. If two variables are truly independent from each other, then we showed that the U-MIC would be equal to 0. Specifically, according to the simulation results, we showed that the U-MIC does a better job in recognizing independence between variables comparing to the MIC.

\begin{figure}[t]
\centering
\includegraphics[width=80mm]{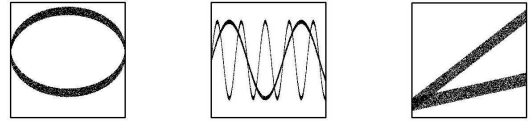}
\caption{Test noisy non-functional relationships for Tables \ref{tab:noisy-non-funcs} and \ref{tab:noisy-timenon-funcs}. }
\label{fig:noisy-non-funcs}
\end{figure}

\begin{table}[ht!]
\begin{center}
\resizebox{\columnwidth}{!}{
\begin{tabular}{|c || c  c  c  | } 
 \hline
 &  Circle &  Sinusoidal Mixture & Two Lines   \\ [0.5ex] 
 \hline\hline
 \textbf{MIC}& 0.54 &0.60 &0.57  \\ 
 \hline
 \textbf{U-MIC}& 0.52 &0.48 & 0.54 \\ 
 \hline
\end{tabular}}
\caption{ MIC($D$) and U-MIC($D$) for different non-functional relationships in Figure \ref{fig:non-funcs}. For this set of experiments, $|D|=200$ and noise is uniformly distributed in [-0.05,0.05].}
\label{tab:noisy-non-funcs}
\end{center}
\end{table}

\begin{table}[ht!]
\begin{center}
\resizebox{\columnwidth}{!}{
\begin{tabular}{|c || c  c  c  | } 
 \hline
 &  Circle &  Sinusoidal Mixture & Two Lines \\ [0.5ex] 
 \hline\hline
 \textbf{MIC}& 35 &16 & 27 \\ 
 \hline
 \textbf{U-MIC}& 0.01 & 0.02 &  0.01  \\ 
 \hline
\end{tabular}}
\caption{ Run time (in sec.) for calculation of MIC($D$) and U-MIC($D$) for different noisy non-functional relationships in Figure \ref{fig:non-funcs}. For this set of experiments, $|D|=200$ and noise is uniformly distributed in [-0.05,0.05].}
\label{tab:noisy-timenon-funcs}
\end{center}
\end{table}

\bibliographystyle{IEEETran}
\bibliography{sample_1.bib}
\end{document}